\newtheorem{theorem}{Theorem}
\begin{document}

\title{Stochastic Pooling Networks}

\author{Mark D. McDonnell}
\address{ Institute for Telecommunications
Research, University of South Australia, Mawson Lakes, SA 5095, Australia }
 \ead{mark.mcdonnell@unisa.edu.au}

\author{Pierre-Olivier Amblard}
\address{GIPSA-lab, Dept. Images and Signals, UMR CNRS 5216, ENSIEG-BP 46, 38402 Saint Martin d'H\`{e}res Cedex, France}
 \ead{bidou.amblard@gipsa-lab.inpg.fr}

\author{Nigel G. Stocks}

\address{School of Engineering, The University of Warwick, Coventry CV4 7AL, UK}
 \ead{ngs@eng.warwick.ac.uk}

\begin{abstract}
We introduce and define the concept of a {\em stochastic
pooling network} (SPN), as a model for sensor systems where redundancy and two forms of `noise' -- {\em lossy compression} and {\em randomness} -- interact in surprising ways. Our approach to analyzing SPNs is information theoretic. We define an SPN as a network with multiple nodes that each produce noisy and compressed measurements of the same information. An SPN must combine all these measurements into a single further compressed network output, in a way dictated solely by naturally occurring physical properties -- i.e. {\em pooling} -- and yet causes no (or negligible) reduction in mutual information. This means SPNs exhibit redundancy reduction as an emergent property of pooling. The SPN concept is applicable to examples in biological neural coding, nano-electronics, distributed sensor networks, digital beamforming arrays, image processing, multiaccess communication networks and social networks.  In most cases the randomness is assumed to be unavoidably present rather than deliberately introduced. We illustrate the central properties of SPNs for several case studies, where pooling occurs by summation, including nodes that are noisy scalar quantizers, and nodes with conditionally Poisson statistics.  Other emergent properties of SPNs and some unsolved problems are also briefly discussed.
\end{abstract}

\maketitle

\section{Introduction and Background}\label{S:Intro}

Two challenges are ubiquitous for many forms of signal and
information processing tasks, whether in biology, or artificial technology. These are (i) robustness to the
effects of random noise, and (ii) efficient extraction of only the
`information' relevant for achieving some goal. The latter challenge
can require {\em lossy compression}, where some `information' is intentionally discarded. Robustness to unavoidable random noise or fluctuations is often achieved using a network or array of sensors. It is less obvious that a network approach can simultaneously achieve both useful lossy compression and noise reduction.

Lossy compression, far from being a limitation, is usually very desirable, because the lost information is deemed either redundant or irrelevant~\cite{Berger.98}. For example, when a histogram is formed from real-valued data, information about individual measurements is discarded in order to drawn conclusions about general trends. Any two measurements that fall within the same histogram bin are treated as equivalent even though the original measurements may have been different. As a further example, the utility of lossy compression techniques based on what is perceptible by our senses, such as the JPEG and MP3 standards, is nowadays self-evident.

The goal of this paper is to introduce and define a concept we call a {\em stochastic pooling network} (SPN). This name was first proposed in~\cite{Zozor.07} to describe a network of sensors where each sensor produces binary measurements of a common information source. In the process of {\em pooling}, this network simultaneously reduces random fluctuations -- i.e. {\em noise} -- via an averaging-like effect, and further compresses the data. The signal processing goal in~\cite{Zozor.07} was a binary detection problem.

The concept described in the current paper evolved from previous work on a model~\cite{Stocks.Mar2000} that can now be described as a special case of an SPN. That model was studied in the context of a form of stochastic resonance~\cite{Wiesenfeld.95,Gammaitoni.Jan98,Bulsara.05} known as suprathreshold stochastic resonance. Nowadays stochastic resonance is a generic term that describes any phenomenon where random noise in a nonlinear system can provide a signal processing benefit~\cite{Kosko,McDonnell}.  The simple SPN in~\cite{Stocks.Mar2000} exhibits stochastic resonance in a much more pronounced way than conventionally is the case, in that `noise benefits' occur for suprathreshold signals and do not rely on small input signal-to-noise ratios (SNR)~\cite{Stocks.Mar2000}. The same model is well suited for studying the information coding properties of populations of parallel neurons~\cite{Stocks.01c,Stocks.02,Hoch.03a}.

We emphasize that here we aim to define {\em Stochastic Pooling Network} in a way that extends its scope significantly beyond that of~\cite{Stocks.Mar2000,Zozor.07}. In particular, our definition means the nodes in the network can be far more complex than the simple quantizers considered in~\cite{Stocks.Mar2000,Zozor.07}, and in related studies on suprathreshold stochastic resonance~\cite{McDonnell.02MEJ,Hoch.03a,Rousseau.03b,McDonnell.07,Oliaei.03,Nguyen.06}.

Here we are also not focussed on suprathreshold stochastic resonance, but emphasize instead the following essential features that a system must posses in order to fit our definition of an SPN:
\begin{itemize}
  \item multiple noise sources -- usually unavoidable and uncontrollable -- corrupting multiple measurements of the same sample of an `information source';
  \item lossy compression of each noisy measurement;
  \item `pooling' of the multiple noisy and compressed measurements to a single measurement that has fewer states than the vector of individual measurements.
\end{itemize}

By {\em pooling}, we mean that the multiple measurements of the `information source' are constrained to being combined in a simple way that is not controllable or reversible, and that loses the details of which measurement originated from each measurement source. Below we use the term {\em pooling function} when mathematically describing the precise manner in which measurements are pooled -- see Section~\ref{S:Define}.

Our motivation for defining an SPN originated from research into two fundamental scientific questions about biological neurons:
\begin{enumerate}
  \item {\em What mechanisms do biological neural systems use to compress information about external stimuli during transduction at the sensory periphery?}
  \item {\em Do unpredictable fluctuations in neural activity in sensory transduction processes contribute to coding/compression effectiveness? If so, is this achieved in conjunction with redundancy?}
\end{enumerate}

The three goals of this paper are to (i) define {\em stochastic pooling network}; (ii) illustrate that SPNs are a generically useful paradigm for a diverse range of scenarios beyond biological neural coding; and (iii) illustrate that SPNs can display surprising emergent behaviour that is closely related to their capabilities for noise reduction and compression, and are therefore interesting to study in their own right.

We proceed in Section~\ref{S:Models} by qualitatively describing a wide range of systems, ranging from the nanoscale to the macroscopic, that illustrate the diversity of the SPN model.  Next, in Section~\ref{S:Define} we state our definition of an SPN. In Section~\ref{S:Cases} we present some simple SPN case studies where we use information theory to explore some of the consequences of our definition. Finally, in Section~\ref{S:Emergent} we discuss several general emergent phenomena that can be expected to occur in any SPN, list some unsolved problems for SPNs, and present some suggestions for further work on the topic.

\section{Examples of Stochastic Pooling Networks}\label{S:Models}

The term {\em node} is used to refer to each of $N$ independently random measurements in an SPN. Note that in some of our discussion we loosely use the word `sensor' to describe the nodes of an SPN. This may be slightly misleading for some scenarios, such as neural populations, where each node is a parallel nerve fibre.  In such circumstances, the word `channel' may be more appropriate.

A simple thought experiment illustrates the SPN principles outlined in Section~\ref{S:Intro}. Suppose a psychophysicist asks $100$ people to vote `yes' or `no' on whether they think a sound is `loud.' After listening to the sound, each person writes down their vote and places it in a hat.  The convenor then counts the votes for `yes,' and obtains a measurement between 0 and 100.  If the experiment is repeated for many randomly chosen sound volumes, the vote counts provide noisy and compressed estimates of each sound volume.  They are compressed because an estimate of an analogue volume is quantized to a discrete scale. They are noisy, because repeated presentation of the same volume may result in a different count. Other features of this thought experiment include (i) multiple noisy measurements of the same information source -- each person may perceive a sound with different biases; (ii) extremely lossy compression in each measurement -- each person is forced to vote on a binary scale; (iii) pooling of measurements -- the convenor doesn't know who votes yes. The examples below outline how the same ideas apply to various data acquisition and processing systems.

\subsection{Artificial Sensor and Communications Networks}

A number of modern sensor and communications networks can be modelled as an SPN, mostly due to some constraint on the network topology, such as limited power in each sensor, or the simplicity inherent in having multiple identical nodes~\cite{Gastpar.05}. An abstraction of such a problem in distributed source coding is known as the Chief Executive Officer (CEO) problem~\cite{Berger.96}. However, work in the signal processing literature is usually concerned with finding optimal designs for aspects of such networks. Our definition of an SPN instead includes a requirement that coding and pooling occurs naturally due to physical properties, and we aim to study the consequences of this, and whether optimal results can emerge. Distributed detection schemes that utilize multiaccess channels~\cite{Liu.07} is a good example of an artificial network where this constraint may be realistic. Indeed, some studies of the latter case~\cite{Li.08} can be easily mapped to the binary detection SPN~\cite{Zozor.07}. One of the earliest digital sonar beamforming systems, known as the DIMUS sonar array, can also be described in this way~\cite{Remley.66}, as well as more recent sonar systems like the Barra sonobuoy.

\subsection{Biological Neurons}

The emergent properties of the simple network considered in~\cite{Stocks.Mar2000}, such as suprathreshold stochastic resonance, have been observed for its extension to networks of neuron models, including the Fitzhugh-Nagumo~\cite{Stocks.01c}, and Hodgkin-Huxley models~\cite{Hoch.03a}. Suprathreshold stochastic resonance has also been observed when replacing additive noise in each node by multiplicative noise~\cite{Nikitin.07}. The group of cochlear nerve fibres that synapse with a single hair cell in the inner ear, and transduce analogue sound waveforms, is an excellent candidate for description as an SPN. This is because multiple parallel nerve fibres code the same signal using action potentials, but have conditionally independent variability~\cite{Johnson.76}.  It is also possible that modulation of synaptic neurotransmitters can be described as an SPN~\cite{Koch}. Such questions are now being addressed in experiments on living neural networks~\cite{Villard.07FaN}.

We expect that studies of SPN models may be useful for the design of future neural prosthetics, such as cochlear implants. These are surgically implanted biomedical prosthetics that allow profoundly deaf people to hear, via electrical stimulation of the cochlear nerve~\cite{Clark}. This electrical stimulation does not replicate the independent variability of healthy cochlear nerve fibres, and many nerve fibres are therefore redundant. It has been proposed that cochlear implants may be improved by the controlled introduction of suprathreshold stochastic resonance~\cite{Stocks.02,Morse.07}. The SPN model is a useful way of understanding why cochlear nerve fibres are naturally independently noisy, and why replicating this in cochlear implant stimulation is desirable~\cite{McDonnell.07SPIE}.  Furthermore, the concept of `pooling' is also relevant to problems in brain-machine interfaces where recordings need to be made from neural activity. In this context, pooling is similar to what is known as `aggregation'~\cite{Johnson.08}.

\subsection{Analog-to-Digital Converter (ADC) Circuits and Digitized Beamforming}

Continuing advances in digital signal processing technology have led to a trend for the ADC in digital communications networks and sensors to be shifted `as close to the antenna as possible.' A simple signal processing task affected by this trend is in-band random noise reduction via {\em averaging}. If averaging is carried out on sequential samples, it is known in radar or sonar as {\em coherent} or {\em pre-detection} integration. Spatial averaging also results during the process of beamforming. Assuming infinite precision, and finite variance noise, coherently averaging $N$ independently noisy observations of the same signal of course reduces the noise by a factor of $\sqrt{N}$.

However, if each measurement has been digitized first, noise is no longer simply reduced by this factor, and the problem can be abstracted as an SPN. Performance of this {\bf `averaging of digitized signals'} depends on the input SNR, the probability distributions of both the signal and the noise, and the number of bits of the ADC~\cite{McDonnell.07IDC,McDonnell.Icand_book}. This scenario is different from dithered ADCs in that more than one noise source is assumed in an SPN~\cite{McDonnell}, although it can be thought of as similar to a spatial translation of the low-pass filtering properties of oversampling ADCs~\cite{Zozor.05}.  A single ADC where each voltage comparator is an independently noisy node may also be modelled as an SPN~\cite{Oliaei.03,Nguyen.06}.

\subsection{Other Candidates for SPN modelling}

The simple binary SPN first considered in~\cite{Stocks.Mar2000} has provided inspiration for novel reliability schemes in nanoscale electronics~\cite{Martorell.08}. At this scale, SNRs are very small, meaning it may be impossible for traditional noise reduction methods to operate. New methods are required~\cite{Lee.03}, including the possibility of using SPN-like arrays to reduce noise~\cite{Martorell.08}. SPNs may also be useful in studies of complex social networks. The subjective voting example given at the start of this section is a very simple case, and could be extended in numerous directions. We also suggest that any other situation where compressed measurements are averaged, such as some image processing tasks, may be suitable for studying as an SPN.

\section{Stochastic Pooling Networks: Definition}\label{S:Define}

For the current paper we define an SPN as a network without any feedback, co-operation, adaptation, or side-information, although we believe such features could be introduced without altering the basic concept. Otherwise, the following definition is, by design, very general. In Section~\ref{S:Cases} we consider specific examples that illustrate the interesting features of SPNs, and for which we can quantify performance.

\subsection{Essential Features}

We begin by expanding on the three features asserted in Section~\ref{S:Intro} to be essential elements of an SPN.
\begin{description}
  \item[\textbf{Multiple sensors make stochastic observations of a common signal}:] The signal being measured may be any information source\footnote{We use the term `information source' qualitatively, but intend it to refer to some parameter that it is desirable to measure, and that can be modeled as a random variable.}, and in general may be either scalar or vector. By stochastic observations we mean that each measurement is a random variable when conditioned on the common signal. This stochasticity may arise either externally -- e.g. additive or multiplicative random noise -- or internally in a node. Each node's output measurement may be either conditionally independent of each other, or correlated, but not perfectly correlated as then those nodes could be treated as a single node.
\item[\textbf{Each sensor communicates compressed measurements}:] The input to a node may be the information source itself, or a mixture of the information source and random noise. Each node in the network is required to carry out {\em lossy compression} on its input, meaning the node output is a discrete variable with less states than the input, and the original input cannot be exactly recovered from the output. It is permissible for different compression operations in each node.
  \item[\textbf{Measurements are combined by pooling}:] The outputs of each node are combined at a `fusion hub' to form an overall
network output consisting of a single observable measurement of the information source. This combining is characterised by a {\em pooling function}, i.e. an abstraction that accounts for all corruption or processing of the vector of the outputs from an SPN's individual nodes.
\end{description}
Note that we have not yet fully defined some restrictions on the form of {\em pooling function} that makes an SPN interesting to study -- we will address this below.

The first two features listed above are deliberately very general, and it is not particularly useful to define the nodes of an SPN solely in such terms. For example, we have referred to individual node outputs as not being `perfectly correlated.' While in many cases we would like to assume that the conditional outputs of all nodes are independent, we do not want to exclude the possibility of correlation. We also use the general term `lossy compression,' which could mean the application of a complex algorithm. However, it is more likely that very simple compression, e.g. signal quantization such as occurs in an analog-to-digital converter circuit, is required to fit the definition of `pooling.' Hence, particular case studies of SPNs should carefully define the properties of individual network nodes, since although the first two general properties are necessary conditions, the precise nature of the nodes is less important to the overall concept of an SPN. Instead, the primary property that needs careful definition is the third one, i.e. the nature of the pooling function.

\subsection{The Pooling Function}

The pooling function can be thought of as a measurement `fusion hub.' In signal processing applications, it is usually assumed that fusion algorithms are open to {\em engineered design}, and that the whole toolkit of optimal filtering and compression algorithms is available for the design.

{\em Is this the case for an SPN?} Our answer to this question is {\em no}, and we wish to embed this assertion in our definition of an SPN, and more specifically, in a definition of what it means to `pool' in an SPN. This approach is due to the motivation.  We are not interested here in {\em designing} a good or optimal network as a whole, or in designing good or optimal fusion and compression schemes.  Techniques for achieving such tasks abound in the modern communications engineering and signal processing literature and practise.

Instead, we want to address the question of whether good or optimal fusion can emerge {\em naturally} in physical and biological systems. That is, can networks naturally combine -- compressively -- correlated measurements in a way that does not lose any information {\em or} loses negligible information? We discuss more precisely what we mean by `loss of information' in Section~\ref{S:loss}. This motivation means we are interested in modelling the part of the SPN between the points where (i) each node produces a compressed measurement, and (ii) the whole network produces its final output, as  a {\em channel}. That is, this pathway is an environment governed by physical properties and constraints that cannot be controlled by external intervention or adaptation. {\em This means we can define `pooling' in terms of properties of a physical channel.} It is this notion that most contributes to making an SPN, as we have defined it, a generically useful and interesting model.

This leads us to at last state a definition of a pooling network.
\begin{quote}
 {\bf A stochastic pooling network is a network with the property that multiple parallel stochastic and compressed measurements of a common input signal are combined into a single measurement by a physical channel, in such a way that pooling of measurements causes no (or negligible) further loss of information about the network's input signal, when compared with the best performance that could be achieved if all sensor measurements were available.}
\end{quote}
Note that if a network exists such that pooling is highly lossy, then it is excluded from our definition.

Our intention with this definition is to move the emphasis from the precise properties of individual nodes onto the properties of the whole network. This does create a difficulty in precisely defining the pooling function -- i.e., what class of functions should be specifically excluded, and what functions are `simple' enough to emerge naturally from physical properties?   We therefore do not propose such a class of functions here, but use the example of pooling by {\em summation} in Section~\ref{S:Cases}. We can envisage also pooling functions being the maximum or minimum measurement, or the majority vote, and -- as discussed in Section~\ref{S:sufficient} -- any {\em sufficient statistic} created from the individual measurements. However, we have yet to fully explore the possibilities.  Provided a network where the pooling function emerges from physical properties rather than design, and meets the `no or negligible loss of information' property, then it may be an SPN according to our definition.

\subsection{Discussion on Information Loss}\label{S:loss}

So far we have used the term `information' in a qualitative sense only. To concretely define the `nodes are lossy compressors' and `pooling with no or negligible information loss' properties, it is necessary to state the technical sense in which we mean `information.'  There are several possible ways of analyzing the performance of an SPN, depending on the signal processing goal of the network.  Here we use Shannon {\em mutual information}~\cite{Cover2}, although other definitions of information, such as Fisher information, and distance measures for discrimination problems, may be equally appropriate~\cite{Lehmann}. The mutual information between two random variables $\alpha$ and $\beta$ is denoted as $I(\alpha;\beta)$, and has units of {\em bits per sample}, or {\em bits per channel use}, depending on the context.

Suppose an SPN has $N$ nodes, and the information source is the random variable $x$, which may be either discrete or continuously valued. We denote the output of each individual node as the discrete random variables $y_i,~i=1,..,N$, where $y_i$ has cardinality $M_i$, and the overall output of the SPN as the discrete random variable $y$, with cardinality $K$. Without loss of generality, we label each state of $y_i$ by the integers $0,..,{M_i-1}$ and each state of $y$ by the integers $0,..,{K-1}$.  Figure~\ref{f:SPN} shows a block diagram of an SPN as we have defined it.

\subsubsection{Compressive Nodes}

Even though in some cases the stochasticity of a node may be completely internal, for the general formulation we consider that each node's measurements can be corrupted by external noise, $\eta_i$, and recognize that it may happen that $\eta_i$ is always zero. We denote the input to each node as a function of the signal and noise, $w_i=f_i(x,\eta_i)$.  Each node operates on this noisy input to obtain a discretely valued measurement $y_i$, according to a conditional probability mass function (PMF)
\begin{equation}\label{NodeProbs}
    P_{y_i|w_i}(y_i=u|w_i) = g_i(u,w_i)\quad u\in0,..,M_i-1.
\end{equation}
In this paper we define $w$ as a continuous random variable. This is sufficient to guarantee the lossy compression property we require of an SPN, since $M_i$ is a discrete random variable. We note that it may be valid for $w$ to be a discrete random variable, but do not consider this in the current paper. In terms of mutual information, it is always true that $I(x;w_i)\ge I(x;y_i)$, since otherwise the {\em data processing inequality} is violated -- see, e.g., Theorem 2.8.1 in~\cite{Cover2}, and~\cite{McDonnell.03EL}. If $x$ is a continuous random variable the inequality is always strict, since $y_i$ is discrete.

In some of the examples in Section~\ref{S:Models}, it may appear that the output of a node is not a discrete random variable. However we are constructing an abstraction of the true physical properties that takes into account only the macroscopic properties that impact on the production of an output measurement from the pooling hub. For example, in the case of a neuron population, the only property of a neuron that codes information is whether it produces an action potential at a certain time. Hence, the fact that the output voltage is in reality an analogue variable that may be subject to fluctuations, is not material.  Random fluctuations in this variable that may impact on the pooling hub's output can be incorporated into the PMF model of each node.

\subsubsection{No or negligible information loss}

We denote the total number of possible states in the vector of individual node outputs as $M_g:=\prod_{i=1}^NM_i$. Without further compression or compaction, this vector can be represented using $B_g=\sum_{i=1}^N\log_2{(M_i)}$ bits. We are interested in networks for which the {\em pooling function}, $y:=h(y_1,y_2,..,y_N)$, is such that the following two properties hold.
\begin{enumerate}
  \item  The pooling function $h(\cdot)$ is such $K<M_g$, so that the number of bits required to represent $y$ without further compression or compaction is $B_y=\log_2{(K)}<B_g$. Without this property, the network cannot be said to `pool.'
  \item The mutual information between the information source and the vector of observations is either equal to that between the source and the network output,
\begin{equation}\label{Lossless}
    I(x;y_1,y_2,..,y_N)=I(x;y),
\end{equation}
or is such that
\begin{equation}\label{LossLeast}
    I(x;y_1,y_2,..,y_N)=I(x;y)+\epsilon,
\end{equation}
where $\epsilon$ is positive and small compared to $I(x;y)$.  This property states what we mean by `no or negligible loss of information.'
\end{enumerate}
These two properties together mean that the pooling function combines $N$ measurements in a way that reduces the number of raw bits required to code the measurements, but without incurring the cost of reduced mutual information.

We remark that Eq.~(\ref{Lossless}) does not mean `lossless compression' in the sense that the term is usually understood. In computer science and information theory, `lossless compression' describes compression that is deterministically reversible, such as file compression schemes, or Huffman coding~\cite{Cover2}. Data is coded in such a way that it can be stored and transmitted using less bits than the original data but later recovered exactly by an inverse operation.  When Eq.~(\ref{Lossless}) holds in an SPN, the vector $(y_1,..,y_N)$ cannot be recovered from $y$, and yet there is no reduction in mutual information. This is known as redundancy reduction, or data compaction, rather than lossy compression. On the other hand, if Eq.~(\ref{LossLeast}) holds, the pooling function must be a lossy compressor.

Clearly, if the pooling function could be designed or controlled, the fact that it might compress or compact the network's measurements is {\em not} remarkable.  We reiterate that what we are interested in modelling with an SPN is the situation where measurements are combined by `pooling,' so that compression/compaction from $B_g$ to $B_y$ bits, with little reduction in mutual information, {\em emerges solely from the physical properties and constraints of the system}. Although it could be desirable that the pooling function was truly a lossless compressor, this constraint means that the pooling function will almost certainly not be a complicated algorithm. This excludes any networks where the pooling hub is capable of carrying out reversible lossy compression algorithms like Huffman coding, or more sophisticated data compaction algorithms.

\subsection{Discussion on Pooling Functions and Sufficient Statistics}\label{S:sufficient}

A useful perspective is to note that Eq.~(\ref{Lossless}) is equivalent to stating that $y=h(y_1,..,y_N)$ is a {\em sufficient statistic} for the observations $y_1,..,y_N$~\cite{Cover2,Lehmann}. If pooling in an SPN is lossless, then by definition the pooling function must be a sufficient statistic created from the vector of measurements from individual nodes -- see Case Study 1 in Section~\ref{S:Cases}.

On the other hand, if the loss is negligible, as in Eq.~(\ref{LossLeast}), we cannot say anything definitive with regard to sufficient statistics. Nevertheless, allowing the loss of {\em negligible} information is an appealing notion, because in many circumstances small losses are acceptable.  In Case Studies 2 and 3 in Section~\ref{S:Cases}, we use numerical methods to show that particular pooling functions that are not sufficient statistics incur negligible loss of mutual information, meaning the studied networks are SPNs.

\section{Case Studies}\label{S:Cases}

We make the following assumptions for the three case studies presented in this section:
\begin{itemize}
  \item $g_i(\cdot)$ is a scalar input, scalar output function;
  \item the cardinality of $y_i$ is the same for all nodes, i.e.  $M_i=M~\forall~i=1,..,N$;
  \item the information source, $x$, is a discrete time stationary random signal, with no memory, i.e. each sample is {\em iid} with PDF $f_x(\cdot)$, and support $S$;
  \item all analysis is for a pooling function that {\em sums} the individual measurements, i.e. $y=h(y_1,..,y_N)=\sum_{i=1}^Ny_i$;
  \item except for the Poisson nodes in Case Study 1, the stochasticity of each node is due to external continuously valued {\em iid} additive random noise, $\eta_i$, that is uncorrelated with the signal, i.e. $w_i=x+\eta_i$;
\end{itemize}
Illustrative example plots of the mutual information, as a function of input SNR, are presented after discussing each case. In these plots, the input SNR is that at the input to an individual SPN node for Gaussian noise, apart from the Poisson case, where input SNR is undefined, and so plots are scaled relative to the mean of the Bernoulli case. It is straightforward to calculate mutual information by numerical integration for any given distribution of $x$, provided the conditional output distribution is known. Details can be found, for example, in~\cite{McDonnell.07}. In all plots, the signal is assumed to be Gaussian. The choice of Gaussian signal and noise is arbitrary. Other finite variance distributions do not lead to qualitatively different results.

\subsection{Case Study 1: Identical `Poisson-like' nodes}

Here we discuss a class of SPNs where a pooling function that sums the vector of node outputs always results in a sufficient statistic for that vector, and hence summation satisfies Eq.~(\ref{Lossless}).  This case study assumes all network nodes have identical PMFs conditioned on their inputs, i.e. we let $g_i(w_i,u)=g(w_i,u)~\forall~i=1,..,N$ in Eq.~(\ref{NodeProbs}), and denote the cardinality of $y_i$ as $M_i=M~\forall~i$. We remark that although the following analysis applies to continuously valued $y_i$ as well, any such network is not an SPN according to our definition.  It is possible that future generalizations of the SPN concept may relax the condition of discretely valued $y_i$.

We are interested in the case when the conditional PMF of each node's output, {\em conditioned on the input signal}, $x$, is from an exponential family, with a linear sufficient statistic~\cite{Lehmann}. This means functions $a(x)$, $b(x)$ and $Z(x)$ exist such that we can write
\begin{equation}\label{exp_fam}
P_{y_i|x}(y_i=u|x) = \frac{a(u) e^{ b(x) u} }{Z(x)}~\forall~i=1,..,N.
 \end{equation}
When written in this form, $Z(x)$ is known as the partition function, which is very important in statistical physics, since most thermodynamical quantities may be found from it and its derivatives. The {\em linearity} of the sufficient statistic for individual nodes leads to a simple pooling function. Since each of the $N$ SPN nodes are identical, their outputs are conditionally {\em iid}, and their joint conditional PMF can be written as
\begin{equation}\label{joint}
P_{y_1,\ldots,y_N | x}(u_1,\ldots,u_N|x) = \frac{ e^{b(x)y}\prod_{i=1}^Na(u_i) }{Z(x)^N}.
\end{equation}
where $y=\sum_{i=1}^Nu_i$.
It is clear that this joint PMF is also from an exponential distribution, with $y$ as a sufficient statistic for $(y_1,..,y_N)$.

We now discuss two examples of SPNs where the nodes can be described in the form of Eq.~(\ref{exp_fam}), and hence pooling by summation satisfies Eq.~(\ref{Lossless}). First, suppose each node is an identical binary quantizer that compares its input to a threshold level, $\theta$, so that $y_i\in\{0,1\}~\forall~i$. Each node is defined by the PMF
\begin{equation*}
g(u,w_i) = P_{y_i|w_i}(u|w_i) = \left\{\begin{array}{cc}
                          u & w_i \ge \theta \\
                          1-u & w_i < \theta,
                        \end{array}\right.~\quad~u=0,1.
\end{equation*}
Suppose also that the input to each node is $w_i=x+\eta_i$, i.e. the information source is subject to {\em iid} additive random noise.  Upon letting $p(x)=1-F_\eta(\theta-x)$, where $F_\eta(\cdot)$ is the cumulative distribution function of the noise, the conditional PMF for given $x$ is a Bernoulli distribution with parameter $p(x)$ and can be written as
\begin{equation*}
P_{y_i|x}(u|x) = p(x)^u ( 1- p(x) )^{1-u}\quad~u\in\{0,1\}.
\end{equation*}
The Bernoulli distribution is known to be an exponential family, and can be rewritten in the form of Eq.~(\ref{exp_fam}) by letting $a(u)=1$, $b(x) = \log{\left(\frac{p(x)}{1-p(x)}\right)}$, and $Z(x)=\frac{1}{1-p(x)}$.

As a second example we consider inherently stochastic nodes that are conditionally Poisson, with rate $\lambda(w_i)$. The conditional PMF of node $i$ is
\begin{eqnarray*}
P_{y_i|w_i}(u|w_i) = \frac{e^{-\lambda(w_i) } \lambda(w_i)^{u}}{u!}\quad~u=0,1,..,\infty.
\end{eqnarray*}
If there is no external noise, so that $w_i=x~\forall~i$, the conditional PMF given the input is again an exponential family with a linear sufficient statistic, and can be written in the form of Eq.~(\ref{exp_fam}) by letting $a(u)=1/u!$, $b(x) = \log{(\lambda(x))}$ and $Z(x)=\exp{(\lambda(x))}$.

If every node in an SPN has its conditional PMF $g(\cdot)$ described by either of these examples, then it is clear that if the pooling function is $y=\sum_{i=1}^Ny_i$, then pooling satisfies Eq.~(\ref{Lossless}), since $y$ is a sufficient statistic for $(y_1,..,y_N)$.   Figure~\ref{f:BinarySPN_Info} shows the behaviour of the mutual information with increasing input SNR for these examples, in comparison with the mutual information of an example of Case Study 3 below.

Naturally occurring summation of parallel measurements can occur for either example. The binary quantization case occurs in distributed sensor networks that use multiaccess channels~\cite{Li.08}, while the Poisson distribution is widely used to model the production and pooling of action potentials in biological neurons~\cite{Gerstner}.  We have shown that efficient processing of information is an emergent property of such networks.

We remark that it has recently been argued that the summation of the output of two populations of neurons encoding information about one stimulus is Bayes optimal for the estimation of the stimulus, if the likelihood of the population is `Poisson-like'~\cite{Ma.06}. `Poisson like' means that the likelihood is an exponential family with linear sufficient statistics. This is exactly the case in the examples described here.

\subsection{Case Study 2: Identical $M$-ary Quantizing Nodes and {\em iid} Additive Noise}

We now consider a case where the SPN models the `averaging of digitized signals' scenario outlined in Section~\ref{S:Models}. We show that while a summing pooling function is not a sufficient statistic, the reduction in mutual information after pooling by summation is negligible, that is, Eq.~(\ref{LossLeast}) is satisfied.

Here each network node represents an analog-to-digital converter circuit. Each digital sample is assumed to be of an independently noisy version of the same signal sample. We model each circuit as an {\em identical} $M$-ary quantizer that is defined by $M-1$ threshold levels, $(\theta_1,..,\theta_{M-1})$. This means we let $g_i(\cdot)=g(\cdot)~\forall~i$. The vector of individual node outputs has $M_g=M^N$ possible states, but is reduced on pooling to $K=N(M-1)+1$ states, as shown in Fig.~\ref{f:DigitalAveraging}. This example is of particular interest in the context of an SPN being a network with some redundancy, because in the absence of any noise the network has maximum redundancy -- i.e.~$\eta_i=0~\forall~i$ -- and is not an SPN.

It is not useful to write each node in the form of Eq.~(\ref{NodeProbs}), but upon defining $\theta_0:=-\infty$ and $\theta_{M}:=\infty$ we can write
\begin{equation*}
y_i = u~\quad \theta_{u} \le w_i < \theta_{u+1}~\quad~u = 0,..,M-1.
\end{equation*}
For an SPN with $N$ such nodes, each node can be thought of as providing a single {\em iid} multinomial trial. There are $N$ trials in total, each of which can produce any of $M$ outcomes. The probabilities of each outcome are
\begin{equation*}
P_{y_i|x}(u|x) = \left\{\begin{array}{cc}
                         F_\eta(\theta_1-x) & u=0, \\
                         F_\eta(\theta_{u+1}-x)-F_\eta(\theta_u-x) & u=1,..,N-1, \\
                         1-F_\eta(\theta_N-x) & u=M.
                       \end{array}
\right.
\end{equation*}
Just as for the binary quantizer scenario in Case Study 1, it is well known that the PMF of a multinomial trial is from an exponential family, as is the result of $N$ {\em iid} multinomial trials. A sufficient statistic for the outcome of $N$ multinomial trials is the count of how many times each outcome occurs, which can be written as an $M-1$ dimensional vector, since the sum of the counts must add to $N$. If this sufficient statistic is formed, the distribution of the sufficient statistic is given by the multinomial distribution. This distribution has $\left(^{N+M-1}_{M-1}\right)$ states.

In the binary case, a pooling function that forms the sum $y=\sum_{i=1}^Ny_i$ is sufficient, since it is equivalent to a count of the number of $1$s. However for $M>2$, a pooling function that forms the counts of each of $M$ outcomes does not occur naturally in scenarios like the `averaging of digitized signals,' case described in Section~\ref{S:Models}. Of course, there are often many sufficient statistics, and one that may be useful in the context of this SPN is stated in the form of the following theorem.
\begin{theorem}
Let $S_j=\sum_{i=1}^Ny_i^j$. Then $S:=(S_1,S_2,..,S_{M-1})$ is a sufficient statistic for $(y_1,y_2,..,y_N)$. Hence, $I(x;(y_1,y_2,..,y_N)) = I(x;(S_1,S_2,..,S_{M-1}))\ge I(x;y)$.
\end{theorem}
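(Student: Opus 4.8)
The plan is to show that $S$ carries exactly the same information as the canonical sufficient statistic for $N$ \emph{iid} multinomial trials, namely the vector of outcome counts $n=(n_0,n_1,\ldots,n_{M-1})$, where $n_k:=\#\{i:y_i=k\}$ and $\sum_{k=0}^{M-1}n_k=N$. The excerpt has already noted that this count vector is a sufficient statistic (being the natural statistic of the exponential family generated by $N$ \emph{iid} multinomial trials), and that a statistic $T$ satisfies $I(x;(y_1,\ldots,y_N))=I(x;T)$ precisely when $T$ is sufficient. Hence it suffices to exhibit a bijection between $S$ and $n$, since an invertible function of a sufficient statistic is itself sufficient.

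First I would rewrite each power sum in terms of the counts. Grouping the $y_i$ by their common value gives
\begin{equation*}
S_j=\sum_{i=1}^N y_i^{\,j}=\sum_{k=0}^{M-1}n_k\,k^{\,j}=\sum_{k=1}^{M-1}n_k\,k^{\,j},\qquad j=1,\ldots,M-1,
\end{equation*}
where the $k=0$ term drops out because $0^{\,j}=0$ for $j\ge1$. This is a linear system $S=A\,\tilde n$ relating $S=(S_1,\ldots,S_{M-1})$ to the reduced count vector $\tilde n=(n_1,\ldots,n_{M-1})$, with coefficient matrix $A_{jk}=k^{\,j}$ for $j,k\in\{1,\ldots,M-1\}$.

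The key step is to argue that $A$ is invertible, so that $\tilde n$ — and hence the full vector $n$ via $n_0=N-\sum_{k\ge1}n_k$ — is uniquely recoverable from $S$. Factoring a $k$ out of the $k$-th column exhibits $A$ as a Vandermonde matrix with the distinct nodes $1,2,\ldots,M-1$ times the diagonal matrix $\mathrm{diag}(1,2,\ldots,M-1)$; both factors are nonsingular, so $\det A=(M-1)!\prod_{1\le a<b\le M-1}(b-a)\neq0$. Thus $S\mapsto n$ is a bijection and $S$ is sufficient, which gives $I(x;(y_1,\ldots,y_N))=I(x;S)$.

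Finally, the inequality is immediate once one notices that the pooling output is literally the first coordinate of $S$, namely $y=\sum_{i=1}^N y_i=S_1$. Since $y$ is a deterministic function of $S$, the data processing inequality invoked earlier in the excerpt gives $I(x;S)\ge I(x;S_1)=I(x;y)$, completing the chain. I expect the only genuine obstacle to be the invertibility of $A$: everything else is bookkeeping, whereas the assertion that the power-sum statistic retains all the information of the count statistic hinges entirely on the Vandermonde structure, which would fail if one took fewer than $M-1$ power sums or if the node alphabet were not of the form $\{0,1,\ldots,M-1\}$.
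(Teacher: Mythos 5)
Your proof is correct, but it takes a genuinely different route from the paper's. The paper stays inside the exponential-family formalism: it writes the single-trial PMF using the Lagrange basis polynomials $q_m(u)$ as $P_{y_i|x}(u|x)=\exp\bigl(\sum_m q_m(u)\log p_m(x)\bigr)$, re-expands this polynomial in $u$ in the monomial basis to get $\exp\bigl(\sum_m b_m(x)u^m\bigr)$, and then observes that for $N$ \emph{iid} trials the natural statistics add, so $(S_1,\ldots,S_{M-1})$ is the natural (hence sufficient) statistic of the joint exponential family. You instead take the outcome counts $n=(n_0,\ldots,n_{M-1})$ as the known sufficient statistic and exhibit an explicit bijection $S\leftrightarrow n$ via the factorization of $A_{jk}=k^j$ into a Vandermonde matrix times $\mathrm{diag}(1,\ldots,M-1)$, then use the fact that an invertible function of a sufficient statistic is sufficient. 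The two arguments rest on the same underlying linear algebra --- nonsingularity of polynomial evaluation at the $M$ distinct points $0,\ldots,M-1$ (your Vandermonde determinant is exactly what licenses the paper's change of basis from $q_m(u)$ to $u^m$) --- but yours is more elementary and constructive: it shows that the power sums actually determine the empirical histogram, and it makes explicit where the count $M-1$ is tight. The paper's version buys thematic unity: it keeps the theorem inside the exponential-family/linear-sufficient-statistic framework that organizes Case Studies 1 and 2 and the ``Poisson-like'' discussion. You also handle the final inequality $I(x;S)\ge I(x;y)$ explicitly by noting $y=S_1$ and invoking the data processing inequality, a step the paper leaves implicit.
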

\begin{proof}
Let $q_m(u)$ be an $(M-1)$--th order polynomial with roots being the integers between $0$ and $M-1$, except for $m$, and such that $q_m(m)=1$. This polynomial exists, and can be written as $q_m(u)=\alpha_m \prod_{j=0, j\not=m}^{M-1}(u-j)$ where
$\alpha_m=(\prod_{j=0, j\not=m}^{M-1}(m-j))^{-1}$. This set of polynomials is called the Lagrange basis, and we can write
\begin{eqnarray*}
P_{y_i|x}(u|x) = \prod_{m=0}^{M-1} p_m(x)^{q_m(u)}= \exp{(\sum_{m=0}^{M-1} q_m(u)\log{(p_m(x))})}.
\end{eqnarray*}
It is clear that we have a multiparameter exponential family, with an $M-1$ dimensional sufficient statistic, the vector $(q_0(u),q_1(u),..,q_{M-1}(u))$. However, it is possible to rewrite the PMF in the form
\begin{eqnarray*}
P_{y_i|x}(u|x) = \exp{(\sum_{m=0}^{M-1} b_m(x)u^{m})},
\end{eqnarray*}
and the vector $(u^0,u^1,..,u^{M-1})$ is also a sufficient statistic. Similar to the binomial case in Case Study 1, if there are $N$ {\em iid} multinomial trials, their joint distribution can also be written as an $M-1$ parameter exponential family, where the vector of sufficient statistics is as stated in Theorem 1, after defining $S_j=\sum_{i=1}^Ny_i^j$.\qedhere

\end{proof}
So, provided that $N>M-1$, the sufficient statistic $S$ reduces the number of measurements without incurring a loss in mutual information, since $B_g=N\log_2{(M)}$, while the raw number of bits required to code $S$ is $B_S=(M-1)\log_2{(N+1)}$. We note that this is far less efficient than the count of the frequency of each state.

Like the state counts, the sufficient statistic $S$ may not be a naturally occurring pooling function in an SPN. However, it is feasible that each node's output could be raised to a power before summation. So we may naturally be interested in the case where the pooling function is simply $y=S_j$ for any integer $j$. Obviously the case of $j=1$ corresponds to the `averaging of digitized signals model.'  Verification that this pooling function satisfies Eq.~(\ref{LossLeast}) is given by Figure~\ref{f:TrinarySPN}. The double peak in Figure~\ref{f:TrinarySPN_InfoDiff} is intriguing. It indicates that there is an optimal input SNR close to 10 decibels where the pooling loss approaches zero, for all $N$ considered.  Further investigation of this is left for future work.

\subsection{Case Study 3: Non-identical binary-quantizing nodes}

In Case Studies 1 and 2 we considered only the case where all nodes were identical.  This allowed us to write their conditional PMFs as exponential families, and find sufficient statistics for the vector of $N$ measurements. However, in general, if the network nodes do not all have the same PMF it is not possible to write the joint PMF as an exponential family, and it is difficult to find sufficient statistics. Nevertheless, the traces in Figure~\ref{f:BinarySPN_InfoDiff} demonstrate numerically that pooling by summation can still lead to a negligible reduction in mutual information.

For this case, the small improvement in mutual information that can be had if there is no pooling relies on an assumption that the ordered vector of nodes is available at a `decoder.' In other words, the measurements from each node must somehow be labelled.  It is more natural in the context of SPNs to assume that measurements are combined in a way such that any such labelling is lost. The thought experiment at the start of Section~\ref{S:Models} is an example where there is no labelling. In such a case, there is no way to match events to nodes, and counting the occurrences of each event becomes a sufficient statistic. Under such an assumption for Case Study 3, pooling by summation loses no information.  We have previously refereed to this as a {\bf no labelling} property~\cite{McDonnell.08ACTW}.

\section{Emergent Properties and Unsolved Problems on SPNs}\label{S:Emergent}

In Section~\ref{S:Intro} we stated that SPNs can display some surprising emergent behaviour. One of these is suprathreshold stochastic resonance, which can be seen for Case Studies 1 and 2 in Figures~\ref{f:BinarySPN_Info} and~\ref{f:TrinarySPN_Info}, in agreement with~\cite{Stocks.Mar2000,McDonnell.Icand_book}.  Starting with tiny SNRs, the mutual information increases gradually from zero with increasing input SNR, reaches its peak at an optimal SNR near $0$ decibels, and then decreases as the SNR continues to increase, that is, suprathreshold stochastic resonance occurs. Identical nodes are redundant in the absence of noise, and the mutual information can be no larger than that of a single node. However, when independent noise is present at the input to each node, all nodes contribute to coding the input signal. Noise can be said to provide a benefit relative to the absence of noise. In a sense, suprathreshold stochastic resonance occurs because random noise improves sub-optimal compression. However, in many examples in Section~\ref{S:Models}, there may be no other way to improve on compression of the network as a whole, and allowing an optimal noise level is the only way to enhance performance.

Other emergent behaviour that might not be predicted from the definition of an SPN have been demonstrated for the binary node case. These include the following.
\begin{itemize}
  \item If the nodes in Case Study 3 are optimized for the network as a whole, for small SNRs Case Study 1 is optimal, i.e. all nodes are identical~\cite{McDonnell.06}. For intermediate SNRs, it is optimal for clusters of nodes to be identical, with more unique nodes as the SNR increases. This occurs in a series of bifurcations~\cite{McDonnell.06,McDonnell}.
  \item The mutual information of very noisy SPNs approaches that of analogue Gaussian channels, i.e. $I(x,y)<0.5\log_2{(1+N\mbox{SNR})}$ while near noiseless SPNs are limited by quantization, $I(x,y)<\log_2{(1+N(M-1))}$~\cite{McDonnell.08ACTW}.
  \item Very large SPNs (i.e. $N\rightarrow\infty$) behave like multiplicative noise channels, and optimal reconstruction of the information source depends only on the noise distribution and $N$~\cite{McDonnell.07}.
  \item Optimizing the noise distribution in an SPN is like optimizing a neuron's stimulus-response curve~\cite{McDonnell.07,McDonnell.PRL08}.
\end{itemize}

As stated in Section~\ref{S:Intro}, the aim of this paper is to define SPN, and to illustrate why SPNs are interesting and generically useful in a number of scenarios. There are many areas which remain unexplored, and we simply list some quite general unsolved questions that we believe would be useful to address.

\begin{itemize}
  \item In what scenarios can naturally occurring pooling functions be sufficient statistics, or lead to negligible reduction in mutual information?
  \item Can mathematical approaches predict the clustering of nodes seen in~\cite{McDonnell.06,McDonnell}?
  \item Can our SPN definition be extended to apply to networks with more complex topologies, incorporating aspects like feedback loops, adaptation, and cooperation between nodes, while maintaining the essential qualitative properties?
  \item Can studies of SPNs inspire new designs for electronic systems, similar to~\cite{Martorell.08}?
  \item Can the features and emergent properties of SPNs be observed to occur in-vivo in biological neurons?
\end{itemize}

We hope that presentation of these unsolved problems will stimulate further work and debate on SPNs in the areas of statistical physics, biophysics and electronics design, and that our definition of {\em stochastic pooling network} will eventually evolve and be refined. In summary, the SPN concept may be appropriate for networks where redundancy is useful for achieving noise reduction and simplicity, and where lossy compression is required for efficiency.

\ack
Mark D. McDonnell is funded by an Australian Research Council Post Doctoral Fellowship, DP0770747. N. G. Stocks is funded by EPSRC grants EP/D051894/1 and EP/C523334/1.

\section*{References}

\begin{figure}[htbp]
\begin{center}\includegraphics[scale=0.3]{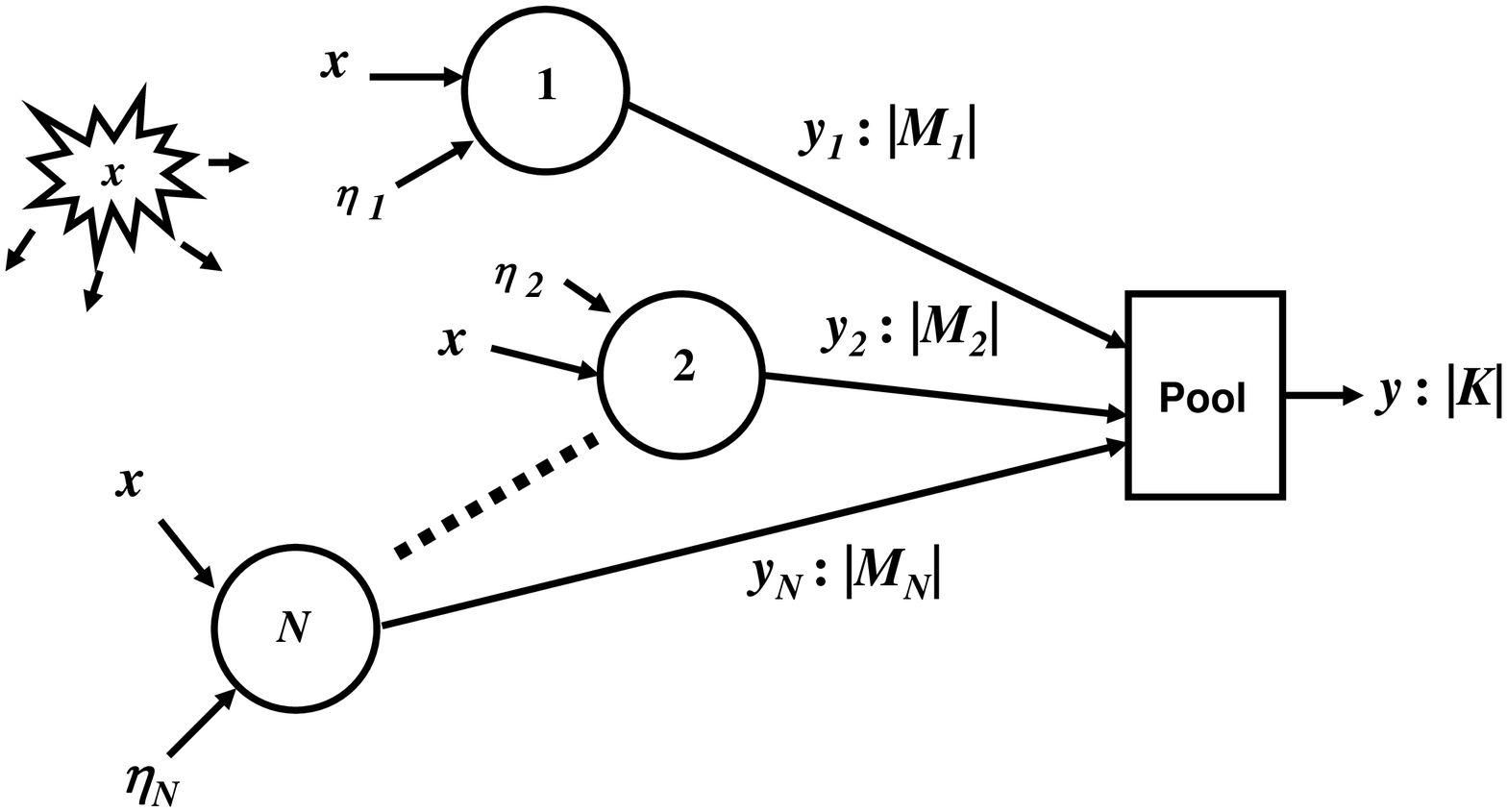}
\caption{An SPN consists of $N$ nodes, each of which operates on noisy versions of the same sample from an information source, the random variable, $x$. Each node's output is stochastic, and is defined for a given signal sample by a conditional probability distribution. This is indicated by $N$ random variables, $\eta_i,~i=1,..,N$. It is permissible for $\eta_i$ to be correlated across nodes. The output of the $i$--th node is an $M_i$ state discrete random variable, , $y_i$, and if $x$ is continuous this means each node is lossy. The overall network `pools' the outputs from each node, in a channel governed by physical properties, to provide an overall network output, $y$, with $K$ states, and is compressed relative to the vector of node outputs.}\label{f:SPN}
\end{center}
\end{figure}

\begin{figure}[htbp]
\begin{center}
{\subfigure[Case Studies 1 and 3]{\includegraphics[scale=0.3]{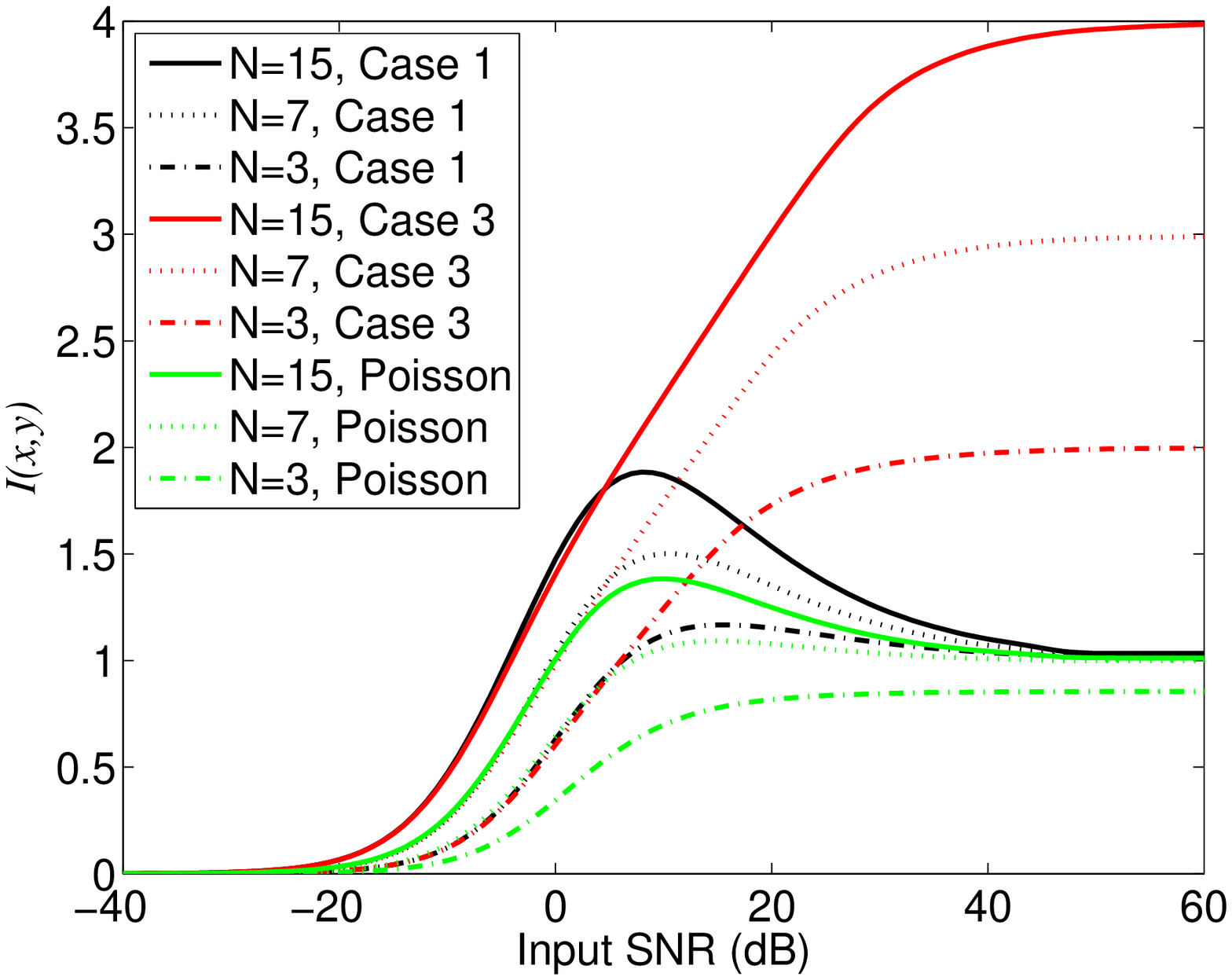}\label{f:BinarySPN_Info}}
\subfigure[Loss in MI for Case study 3]{\includegraphics[scale=0.3]{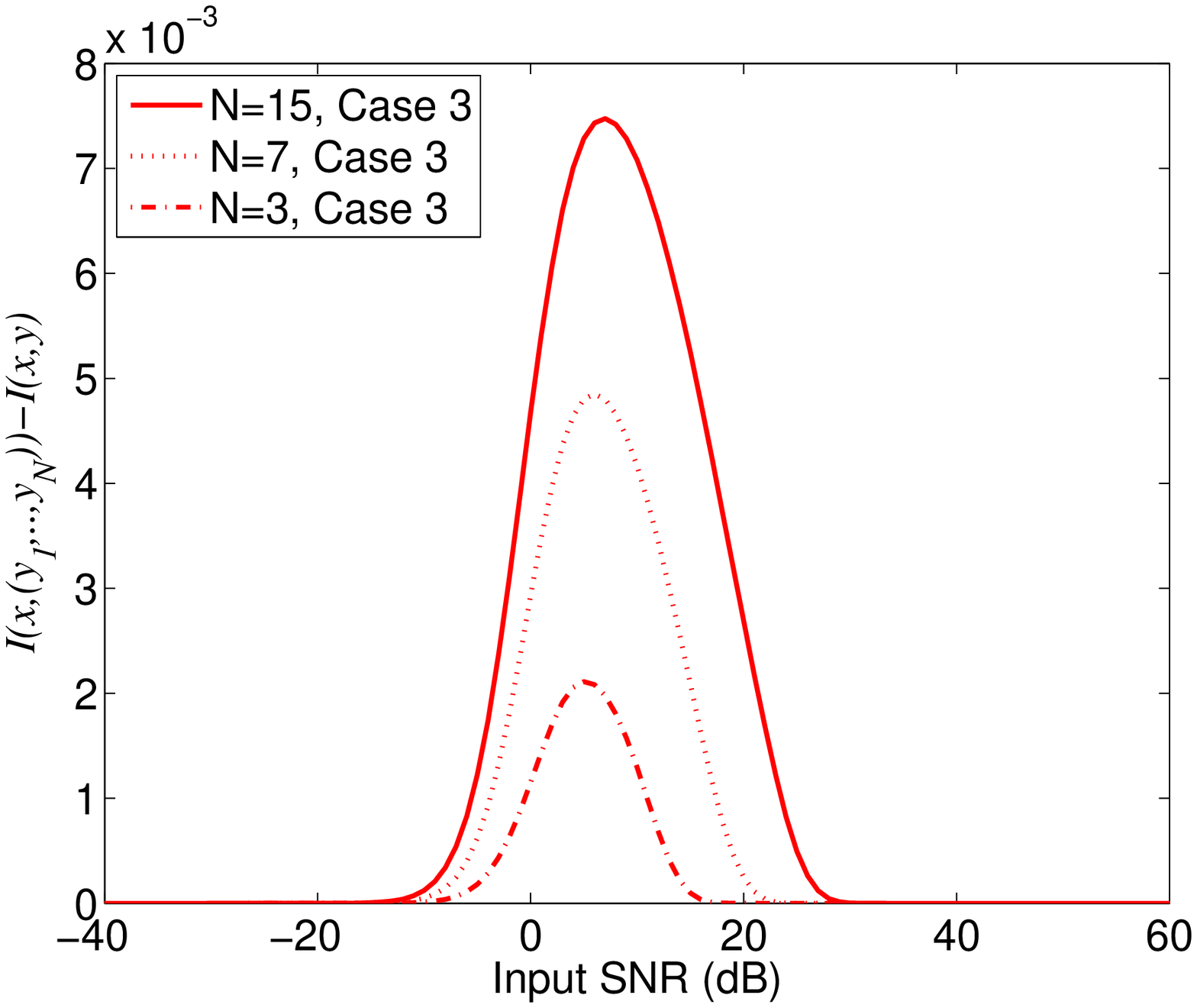}\label{f:BinarySPN_InfoDiff}}}
\caption{(a) Mutual information for SPNs with (i) additive noise and identical binary ($M=2$) quantizing nodes (black traces); (ii) identical Poisson nodes (green traces); and (iii) non-identical noisy binary quantizing nodes (red traces). The Poisson case shown is not plotted against SNR, but is scaled such that the expected value of $y$ is the same as for the binary quantizing case at each SNR. This is achieved by $\lambda(x)=p(x)$. (b) Difference between mutual information before and after pooling by summation, for Case Study 3. The plots for Case Study 3 are for threshold values chosen to optimize the mutual information in the absence of any external noise.}\label{f:BinarySPN}
\end{center}
\end{figure}

\begin{figure}[htbp]
\begin{center}\includegraphics[scale=0.3]{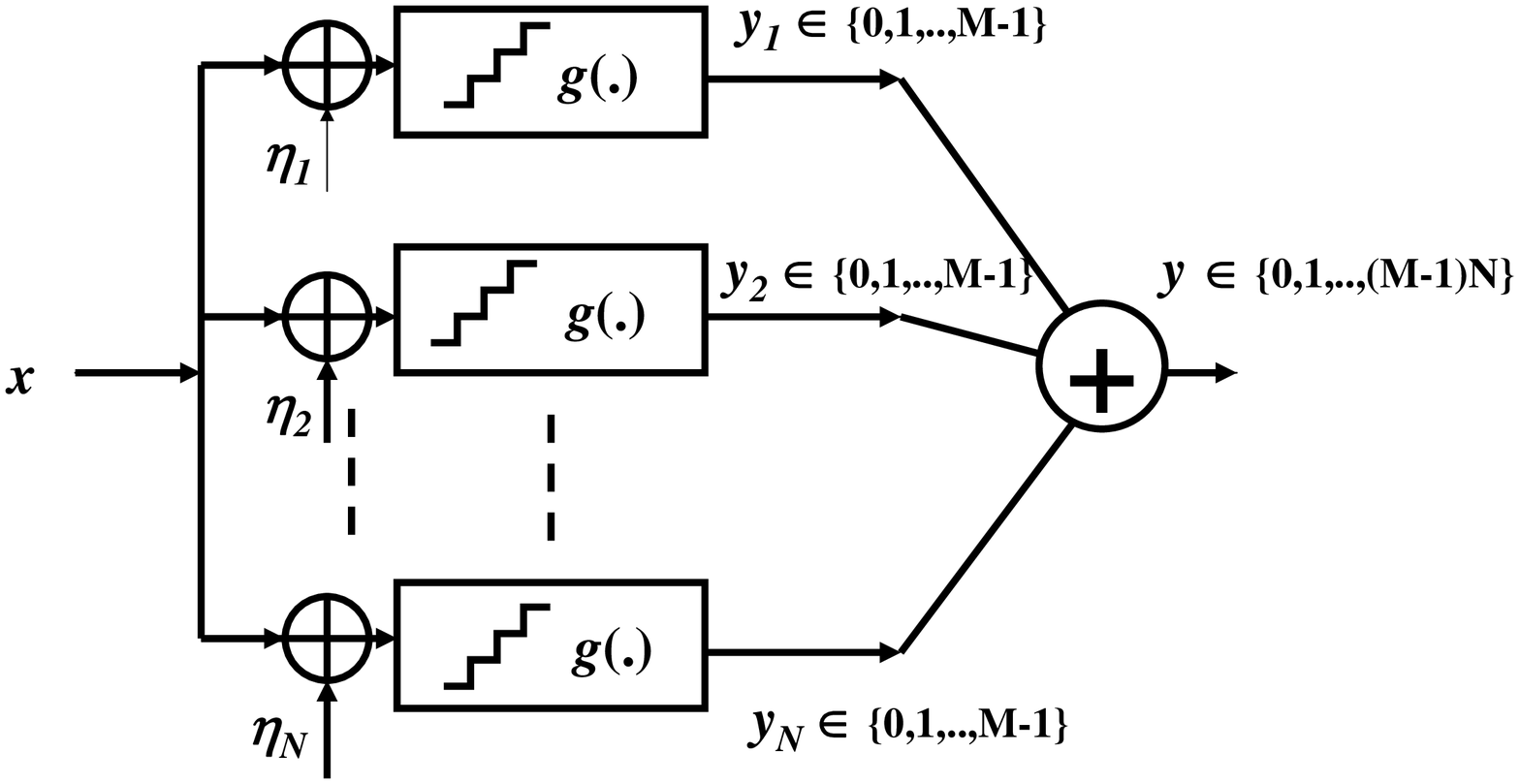}
\caption{SPN for Case Study 2. Each of $N$ nodes is an $M$-ary quantizer. The input to each node is a common random signal corrupted by {\em iid} additive noise. The pooling function simply sums the outputs from each node, to result in a $K=(M-1)N+1$-state discrete random variable. This SPN models the averaging of digitized signals, as discussed in Section~\ref{S:Models}.}\label{f:DigitalAveraging}
\end{center}
\end{figure}

\begin{figure}[htbp]
\begin{center}
{\subfigure[Case Study 2]{\includegraphics[scale=0.3]{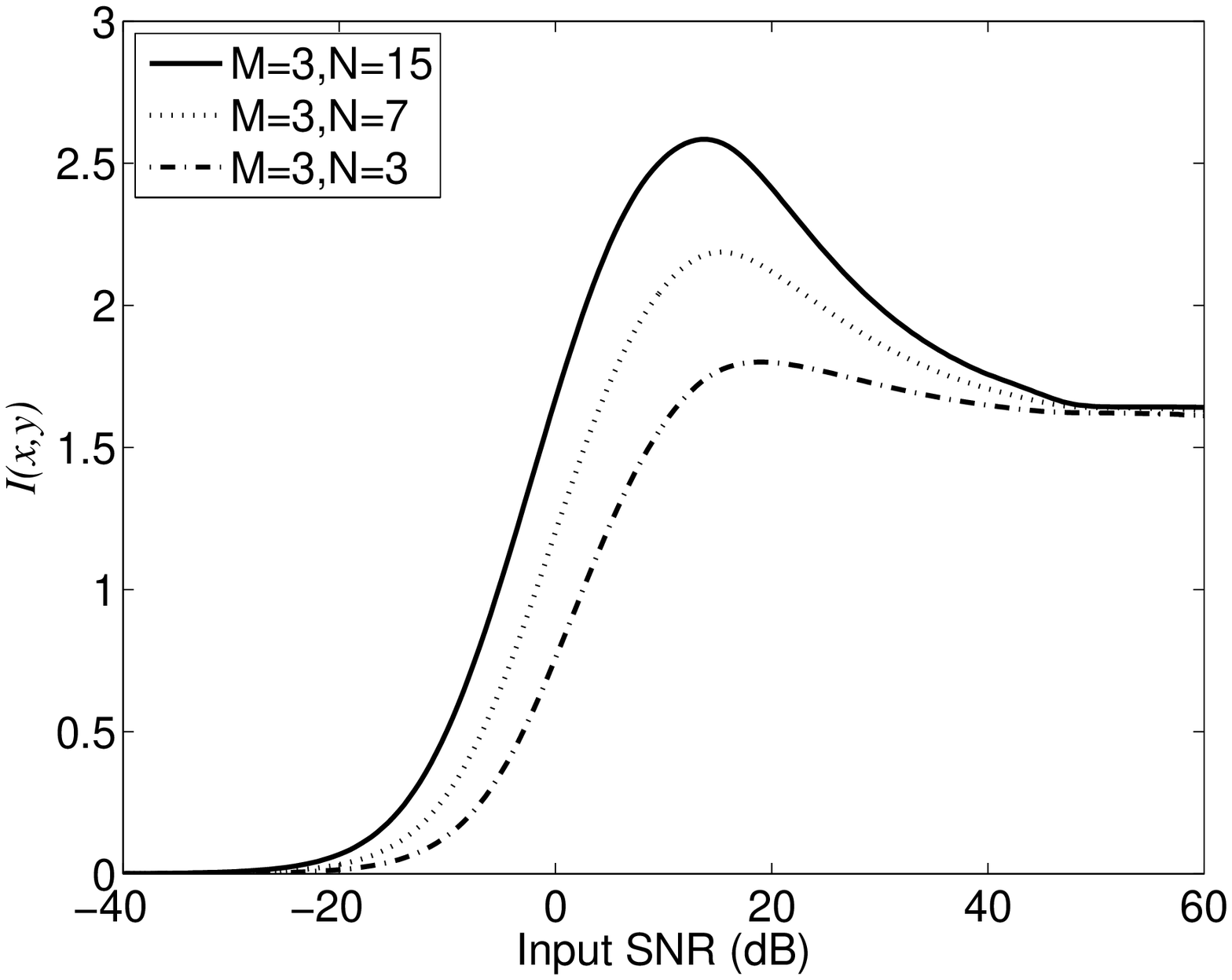}\label{f:TrinarySPN_Info}}
\subfigure[Loss in MI for Case Study 2]{\includegraphics[scale=0.3]{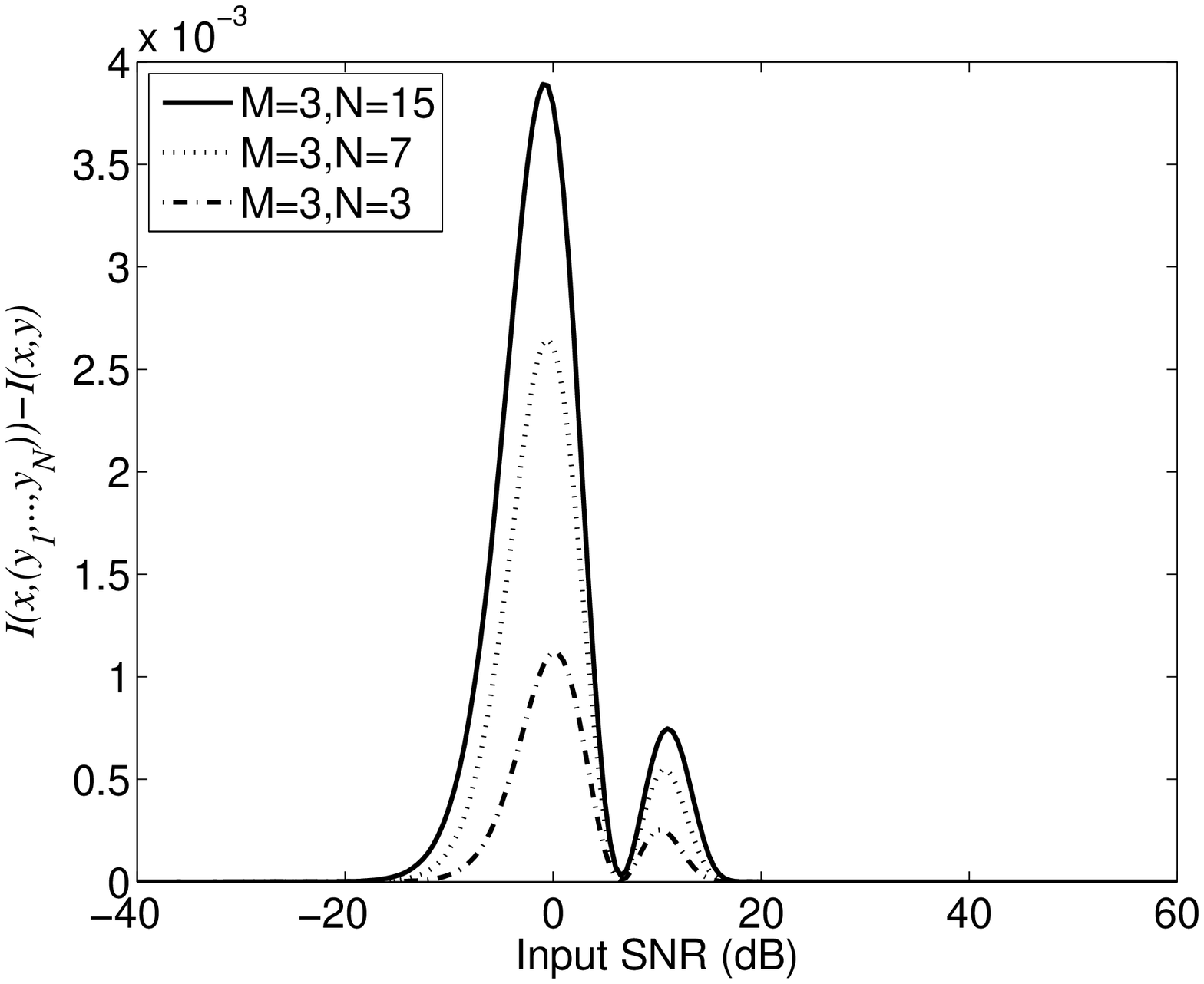}\label{f:TrinarySPN_InfoDiff}}}
\caption{(a) Mutual information for an SPN with additive noise and $N$ identical trinary ($M=3$) quantizing nodes, that each have the same threshold levels $\theta_1$ and $\theta_2$. (b) Difference between mutual information before and after pooling by summation.}\label{f:TrinarySPN}
\end{center}
\end{figure}

\end{document}